\newtheorem{theorem}{Theorem}
\newtheorem{definition}{Definition}
\newcommand{\RR}{\mathbb{R}}
\newcommand{\st}{{\mathrm{~subject~to~}}}
\DeclareMathOperator*{\Min}{minimize}
\begin{document}

\title{A remark on weaken restricted isometry property in compressed sensing}

\author{Hui Zhang\thanks{
College of Science, National University of Defense Technology,
Changsha, Hunan, 410073, P.R.China. Corresponding author. Email: \texttt{h.zhang1984@163.com}}
}

\date{}

\maketitle

\begin{abstract}
The restricted isometry property (RIP) has become well-known in the compressed sensing community. Recently, a weaken version of
RIP was proposed for exact sparse recovery under weak moment assumptions. In this note, we prove that the weaken RIP is also sufficient for \textsl{stable and robust} sparse recovery by linking it with a recently introduced robust width property in compressed sensing. Moreover, we show that it can be widely apply to other compressed sensing instances as well.
\end{abstract}

%\begin{keyword}
%restricted isometry property; compressed sensing; robust width property; stable and robust sparse recovery
%\end{keyword}

%% \linenumbers

%% main text
\section{Introduction}\label{intro}
The concept of restricted isometry property (RIP), which was introduce by Cand$\grave{e}$s and Tao in \cite{candes2005decoding}, has become well-known in the compressed sensing community. Recently, a weaken version of RIP appeared in \cite{lecue2014sparse} as an important tool for \textsl{exact} sparse recovery under weak moment assumptions. In this note, we prove that the weaken RIP is also sufficient for \textsl{stable and robust} sparse recovery. A key observation is that the weaken RIP implies the robust width property, which was proposed in \cite{jameson2014robust} and allows uniformly stable and robust recovery for many instances in compressed sensing \cite{jameson2014robust,zhang2015robust}. We establish our discovery in a very general form by introducing the concepts of atom norm and simple subset, and hence other instances in compressed sensing such as low-rank recovery are commonly covered.  The results in this study indicate that the weaken RIP plays a vital role similar to the original RIP.

\section{Robust width and weaken-RIP}
\subsection{Robust width property}
The robust width property was formally defined in \cite{jameson2014robust}. Before stating its definition, we first introduce some notations. Let $\mathcal{H}$ be a finite-dimensional Hilbert space with inner product $\langle \cdot, \cdot\rangle$ and $\|\cdot\|$ be the norm induced by the inner product over  $\mathcal{H}$. We denote the Euclidean norm by $\|\cdot\|_2$. Let $\Phi :\mathcal{H}\rightarrow \mathbb{F}^m$ denote some known linear operator, where $\mathbb{F}$ is either $\mathbb{R}$ or $\mathbb{C}$.
\begin{definition}[robust width property, \cite{jameson2014robust}]
We say a linear operator $\Phi :\mathcal{H}\rightarrow \mathbb{F}^m$ satisfies the $(\rho,\alpha)$-robust width property over $B_\sharp$ if
$$ \|x\|\leq \rho \|x\|_\sharp$$
for every $x\in \mathcal{H}$ such that $\|\Phi x\|_2<\alpha \|x\|$; or equivalently if
$$\|\Phi x\|_2\geq \alpha \|x\| $$
for every $x\in \mathcal{H}$ such that $\|x\|>\rho \|x\|_\sharp$. Here, $\|\cdot\|_\sharp$ is certain norm used to promote certain structured solutions to underdetermined systems of linear equations.
\end{definition}

\subsection{Weaken-RIP property}
The RIP was originally introduced in \cite{candes2005decoding}. In the latter, many researchers have contributed to the topic of RIP; for more information on its development, one could refer to \cite{foucart2014math}. Here, we write down the definition of RIP formulated in \cite{cohen2009compressed}.
\begin{definition}[RIP, \cite{candes2005decoding,cohen2009compressed}]
Let $\Sigma_k:=\{x\in \RR^n, \|x\|_0\leq k\}$ be the k-sparse vector set, where $\|x\|_0$ stands for the number of nonzero
components of vector $x$. A matrix $\Gamma\in\RR^{m\times n} $ is said to have $(k,\delta)$-RIP property for sparsity $k\in [n]:=\{1, 2,\cdots, n\}$ and distortion $0<\delta<1$ if
for every $k$-sparse vector in $\Sigma_k$, it holds
$$(1-\delta)\|x\|_2\leq \|\Gamma x\|_2\leq (1+\delta) \|x\|_2.$$
 \end{definition}

The following is the definition of the weaken RIP appeared in \cite{lecue2014sparse}.
\begin{definition}[weaken-RIP,  \cite{lecue2014sparse}]
Let $(e_1, \cdots, e_n)$ be the canonical
basis of $\RR^n$. A matrix $\Gamma\in\RR^{m\times n} $ is said to have  $(k,\alpha, \beta)$-weaken-RIP property for sparsity $k\in [n]$ and distortions $\alpha, \beta>0$ if
\begin{enumerate}
  \item[a)] for every $k$-sparse vector in $\Sigma_k$, $\|\Gamma x\|_2 \geq \alpha \|x\|_2$ and
  \item[b)] for every $i\in [n]$,  $\|\Gamma e_i\|_2 \leq \beta$.
\end{enumerate}
 \end{definition}

Compared with the original RIP, the weaken-RIP are strictly weaker, as it suffices to verify the right-hand side of the RIP just for 1-sparse vectors and not for all s-sparse vectors \cite{lecue2014sparse}. Because of such relaxed restriction of the right-hand side of the RIP, the weaken-RIP contributes as a main reason for weakening moment assumptions in exact sparse recovery. In this note, we will show that the weaken-RIP also ensures stable and robust sparse recovery. Moreover, the idea behind of the weaken-RIP can widely apply to other instances in compressed sensing. In what follows, we try to state a generalized version of the weaken-RIP. To this end, let $\mathcal{B}$  be a collection of atoms such that the following holds
$$\mathcal{H}=\{x\in \mathcal{H}: x=\sum_{b \in\mathcal{B}}  \sigma_b  b , \sigma_b \geq 0, \forall b \in\mathcal{B}\}.$$
We assume that $\mathcal{B}$ is centrally symmetric about the origin and consider the atom norm $\|\cdot\|_{\mathcal{B}}$ induced by $\mathcal{B}$. Recall that $\|\cdot\|_{\mathcal{B}}$ has the following expression \cite{chandrasekaran2012convex}:
$$\|x\|_{\mathcal{B}}= \inf \{ \sum_{b \in\mathcal{B}}  \sigma_b: x=\sum_{b \in\mathcal{B}}  \sigma_b  b , \sigma_b \geq 0, \forall b \in\mathcal{B}\}.$$
Let $k<n$ and define $k$-simple subset relative to $\mathcal{B}$ as follows:
$$\mathcal{A}:=\{x\in \mathcal{H}: x=\sum_{i=1}^k \sigma_i b_i, \sigma_i\geq 0, b_i\in\mathcal{B}\}.$$
Before going forward, we give two well-known instances in compressed sensing. The first one is the sparse vector recovery, where we can take $\mathcal{B}=\{\pm e_1, \pm e_2,\cdots, \pm e_n\}$, and then the corresponding  $k$-simple subset $\mathcal{A}$ relative to $\mathcal{B}$ is just the k-sparse vector set $\Sigma_k$, and the induced norm by $\mathcal{B}$ is the $\ell_1$-norm. The second one is the low-rank matrix recovery, where we can take $\mathcal{B}=\{uv^T\in\RR^{n\times n}:\|u\|_2=\|v\|_2=1\}$, and then the corresponding  $k$-simple subset $\mathcal{A}$ relative to $\mathcal{B}$ is k-rank matrix set,  and the induced norm by $\mathcal{B}$ is the Schatten $\ell_1$-norm. For more cases, please refer to \cite{chandrasekaran2012convex}.
\begin{definition}[Generalized-weaken-RIP]
Let $\mathcal{A}$ be a $k$-simple subset relative to $\mathcal{B}$. A linear operator $\Phi :\mathcal{H}\rightarrow \mathbb{F}^m$ is said to have  $(k,\alpha, \beta)$-generalized-weaken-RIP property for sparsity $k\in [n]$ and distortions $\alpha, \beta>0$ if
\begin{enumerate}
  \item[a)] for every element in $\mathcal{A}$, $\|\Phi x\|_2 \geq \alpha \|x\| $ and
  \item[b)] for every element in $\mathcal{B}$, $\|\Phi x\|_2 \leq \beta \|x\|$.
\end{enumerate}
 \end{definition}
 Note that the above definition on one hand is a generalization of the weaken-RIP, and on the other hand can also be viewed as a weaken version of the generalized RIP in \cite{bourrier2015fund}.
\begin{definition}[Generalized-RIP, \cite{bourrier2015fund}]
Let $\mathcal{E}$ be some subset of $\mathcal{H}$. A linear operator $\Phi :\mathcal{H}\rightarrow \mathbb{F}^m$ is said to have  $(k,\alpha, \beta)$-generalized-RIP property for sparsity $k\in [n]$ and distortions $\alpha, \beta>0$ if
\begin{enumerate}
  \item[a)] for every element in $\mathcal{E}$, $\|\Phi x\|_F \geq \alpha \|x\|_G $ and
  \item[b)] for every element in $\mathcal{E}$, $\|\Phi x\|_F \leq \beta \|x\|_G$
\end{enumerate}
where $\|\cdot\|_F$ and $\|\cdot\|_G$ are certain norms.
 \end{definition}

\section{Main results}
Now, we state our main result:
\begin{theorem}[weaken-RIP implies robust width property]\label{mainthm}
Let $\mathcal{A}$ be a $k$-simple subset relative to  $\mathcal{B}$, assume that each element $b\in \mathcal{B}$ satisfies $\|b\|=1$, and take $\|\cdot\|_{\sharp}=\|\cdot\|_{\mathcal{B}}$. If linear operator $\Phi :\mathcal{H}\rightarrow \mathbb{F}^m$ has the $(k,\alpha, \beta)$-generalized-weaken-RIP property for sparsity $1 < k\in [n]$ and distortions $\alpha, \beta>0$, then it must have the $(\rho, \widetilde{\alpha})$-robust width property with
$$\widetilde{\alpha}=\sqrt{ \alpha^2 -\frac{\beta^2-\alpha^2}{\rho^2(k-1)}}.$$
\end{theorem}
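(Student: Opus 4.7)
The plan is to reduce the $(\rho,\widetilde{\alpha})$-robust width conclusion, namely $\|\Phi x\|_2\geq\widetilde{\alpha}\|x\|$ for every $x$ with $\|x\|>\rho\|x\|_{\mathcal{B}}$, to a single $x$-independent auxiliary estimate
$$
\|\Phi x\|_2^2\;\geq\;\alpha^2\|x\|^2\;-\;\frac{\beta^2-\alpha^2}{k-1}\,\|x\|_{\mathcal{B}}^2,\qquad x\in\mathcal{H}.
$$
Once this holds universally, the robust-width hypothesis $\|x\|_{\mathcal{B}}^2<\|x\|^2/\rho^2$ can be substituted into the negative term to give $\|\Phi x\|_2^2\geq\alpha^2\|x\|^2-\tfrac{\beta^2-\alpha^2}{\rho^2(k-1)}\|x\|^2=\widetilde{\alpha}^2\|x\|^2$, which is the stated inequality after taking square roots.

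To establish the auxiliary estimate I would fix an atomic decomposition $x=\sum_{i=1}^n\sigma_ib_i$ with $\sigma_i\geq 0$ and $\sum_i\sigma_i$ within $\varepsilon$ of $\|x\|_{\mathcal{B}}$, and rewrite
$$
\|\Phi x\|_2^2-\alpha^2\|x\|^2\;=\;\sum_{i,j}\sigma_i\sigma_jM_{ij},\qquad M_{ij}:=\langle\Phi b_i,\Phi b_j\rangle-\alpha^2\langle b_i,b_j\rangle.
$$
The weaken-RIP supplies two kinds of information about $M$: first, the atomwise diagonal bound $M_{ii}=\|\Phi b_i\|_2^2-\alpha^2\in[0,\beta^2-\alpha^2]$, since each $b_i$ itself lies in $\mathcal{A}$; and second, the block positivity $\sum_{i,j\in T}\sigma_i\sigma_jM_{ij}\geq 0$ for every index set $T$ of size $k$, coming from the lower-RIP inequality $\|\Phi x_T\|_2^2\geq\alpha^2\|x_T\|^2$ applied to $x_T:=\sum_{i\in T}\sigma_ib_i\in\mathcal{A}$. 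I would then combine the block inequalities, for instance by averaging over all $k$-subsets of $[n]$ and invoking the identities $\binom{n-1}{k-1}-\binom{n-2}{k-2}=\binom{n-2}{k-1}$ and $\binom{n-2}{k-1}/\binom{n-2}{k-2}=(n-k)/(k-1)$, so that the off-diagonal part reassembles into $\|\Phi x\|_2^2-\alpha^2\|x\|^2$ and the diagonal residue is controlled by $\tfrac{1}{k-1}(\beta^2-\alpha^2)\sum_i\sigma_i^2$, exactly with the factor $1/(k-1)$ appearing in the theorem.

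The main technical obstacle is eliminating the residual dependence on the decomposition's support size $n$: naive averaging produces a combinatorial coefficient of $(n-k)/(k-1)$ in front of $\sum_i\sigma_i^2$, whereas the stated bound requires a quantity intrinsic to $\|x\|_{\mathcal{B}}^2$ alone. I expect to handle this by choosing the decomposition to minimize the second moment $\sum_i\sigma_i^2$ (for example by a Caratheodory-type reduction to at most $\dim\mathcal{H}+1$ atoms), or by reweighting the averaging so that the coefficient on $\sum_i\sigma_i^2$ collapses to a constant independent of $n$ and the resulting bound is controlled by $(\sum_i\sigma_i)^2=\|x\|_{\mathcal{B}}^2$. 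Passing to the limit $\varepsilon\to 0$ in the choice of decomposition and then invoking the robust-width hypothesis yields the claimed value of $\widetilde{\alpha}$.
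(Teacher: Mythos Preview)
Your overall architecture is exactly the paper's: reduce everything to the single auxiliary estimate
\[
\|\Phi x\|_2^2\;\geq\;\alpha^2\|x\|^2\;-\;\frac{\beta^2-\alpha^2}{k-1}\,\|x\|_{\mathcal B}^2,
\]
then plug in $\|x\|_{\mathcal B}<\|x\|/\rho$. The difference, and the place where your argument is not yet complete, is in how that estimate is proved.

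The paper does not average over $k$-subsets; it uses the \emph{Maurey empirical method}. Given a representation $x=\sum_b\sigma_b b$ with $\gamma=\sum_b\sigma_b$, define a random atom $Y$ by $\mathbb P(Y=\gamma b)=\sigma_b/\gamma$, take i.i.d.\ copies $Y_1,\dots,Y_k$, and set $z=\tfrac1k\sum_i Y_i\in\mathcal A$. Taking expectations of $\|\Phi z\|_2^2\geq\alpha^2\|z\|^2$ and computing the cross terms gives, with no combinatorial residue,
\[
\|\Phi x\|_2^2\;\geq\;\alpha^2\|x\|^2-\frac{\gamma}{k-1}\Bigl(\sum_b\sigma_b\|\Phi b\|_2^2-\alpha^2\sum_b\sigma_b\|b\|^2\Bigr)\;\geq\;\alpha^2\|x\|^2-\frac{\gamma^2(\beta^2-\alpha^2)}{k-1},
\]
and then one optimizes over representations to replace $\gamma$ by $\|x\|_{\mathcal B}$.

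Your deterministic averaging over $k$-subsets is the sampling-\emph{without}-replacement version of this, which is why the spurious factor $(n-k)/(k-1)$ appears. Your two proposed remedies do not close the gap as stated: a Carath\'eodory reduction only bounds $n$ by $\dim\mathcal H+1$, which is unrelated to $k$; and ``minimizing $\sum_i\sigma_i^2$'' is not the right objective, since splitting atoms drives $\sum_i\sigma_i^2$ to zero while $n\to\infty$, and it is the product $(n-k)\sum_i\sigma_i^2$ that must be controlled. The correct ``reweighting'' is to allow repeated indices and weight each $k$-tuple $(j_1,\dots,j_k)\in[n]^k$ by $\prod_l\sigma_{j_l}/\gamma$; carrying out that weighted sum is literally the Maurey expectation computation and yields $\gamma^2/(k-1)$ directly. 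Equivalently, refine the decomposition by splitting each $\sigma_ib_i$ into $m_i\approx M\sigma_i/\gamma$ equal pieces and send $M\to\infty$: then $(n-k)\sum_i\sigma_i^2\to\gamma^2$. Either way, the missing ingredient is precisely the Maurey device that the paper invokes from the outset.
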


\begin{proof}
We divide the proof into two steps.

\textbf{Step 1:} Let $x$ be any nonzero element in $\mathcal{H}$; then it has representations via the elements in $\mathcal{B}$. Take one of its representations, say $x=\sum_{b \in\mathcal{B}}  \sigma_b  b , \sigma \geq 0, b \in\mathcal{B}$. Let $\gamma= \sum_{b \in\mathcal{B}}  \sigma_b$; it must be positive since $x$ is a nonzero element.
Now, we prove the following inequality via the Maurey empirical method, which was employed to prove similar results in \cite{oliveira2013lower,lecue2014sparse}:
\begin{equation}\label{ineq1}
\|\Phi x\|_2^2\geq \alpha^2 \|x\|^2-\frac{\gamma}{k-1}\left(\sum_{b \in\mathcal{B}} \sigma_b\|\Phi b\|_2^2-\alpha^2 \sum_{b \in\mathcal{B}} \sigma_b\|b\|^2\right).
\end{equation}

Let $Y$ be a random element in $\mathcal{H}$ defined by
$$\mathbb{P}(Y=\gamma b)=\frac{\sigma_b}{\gamma},$$
where $b \in\mathcal{B}$. Then, $$\mathbb{E} Y=  \sum_{b \in\mathcal{B}} \gamma b \frac{\sigma_b}{\gamma}=\sum_{b \in\mathcal{B}}  \sigma_b  b =x.$$
Let $Y_1, Y_2, \cdots, Y_k$ be independent copies of $Y$ and set $z=\frac{1}{k}\sum_{i=1}^kY_i$. Then, $z\in \mathcal{A}$ for every realization of $Y_1, Y_2, \cdots, Y_k$. By the definition of the $(k,\alpha, \beta)$-generalized-weaken-RIP property, we have $\|\Phi z\|_2^2\geq \alpha^2\|z\|^2$ and hence
$$\mathbb{E} \|\Phi z\|_2^2\geq \alpha^2\mathbb{E}\|z\|^2.$$
It is straightforward to verify the following relationships:
\begin{enumerate}
  \item [i)] $\mathbb{E} \langle Y, Y\rangle =\gamma  \sum_{b \in\mathcal{B}} \sigma_b\|b\|^2$,
  \item [ii)] for every $1\leq i\leq k$, $$\mathbb{E} \langle \Phi Y_i,\Phi  Y_i\rangle =\gamma  \sum_{b \in\mathcal{B}} \sigma_b\|\Phi b\|^2,$$
  \item [iii)] and for each pair $(i,j)$ satisfying $1\leq i\neq j\leq k$, $$\mathbb{E} \langle \Phi Y_i,\Phi  Y_j\rangle = \|\Phi x\|_2^2.$$
\end{enumerate}
Therefore, we have
$$\mathbb{E}  \|\Phi z\|_2^2=\frac{1}{k^2}\sum_{i,j=1}^k\mathbb{E}\langle \Phi Y_i, \Phi Y_j\rangle
=\frac{k-1}{k}\|\Phi x\|_2^2+\frac{\gamma}{k}\sum_{b \in\mathcal{B}} \sigma_b\|\Phi b\|^2.$$
Similarly, it holds that
$$\mathbb{E}  \| z\|^2=\frac{1}{k^2}\sum_{i,j=1}^k\mathbb{E}\langle  Y_i, Y_j\rangle=\frac{k-1}{k}\|x\|_2^2+\frac{\gamma}{k}\sum_{b \in\mathcal{B}} \sigma_b\|b\|^2.$$
Substituting these two relationships to $\mathbb{E} \|\Phi z\|_2^2\geq \alpha^2\mathbb{E}\|z\|^2$ and rearranging terms, we obtain the desired inequality.

\textbf{Step 2:} Apply the upper bound estimation in weaken-RIP and the robust width property to finish the proof. Notice that $\|b\|=1$ and $\|\Phi b\|_2^2\leq \beta^2\|b\|^2=\beta^2$, the inequality \eqref{ineq1} can be simplified into
\begin{equation}\label{ineq2}
\|\Phi x\|_2^2\geq \alpha^2 \|x\|^2-\frac{\gamma^2(\beta^2-\alpha^2)}{k-1}
\end{equation}
Since the expression of $x$ is taken arbitrarily, we derive that
\begin{subequations}
\begin{align*}
\|\Phi x\|_2^2& \geq \sup_{x=\sum_{b \in\mathcal{B}}  \sigma_b  b, \sigma_b\geq 0}\left(\alpha^2 \|x\|^2-\frac{\gamma^2(\beta^2-\alpha^2)}{k-1}\right)  \\
  &= \alpha^2 \|x\|^2-\frac{\beta^2-\alpha^2}{k-1} \sup_{x=\sum_{b \in\mathcal{B}}  \sigma_b  b, \sigma_b\geq 0} (\sum_{b \in\mathcal{B}}\sigma_b)^2  \\
  & = \alpha^2 \|x\|^2-\frac{\|x\|^2_{\mathcal{B}}(\beta^2-\alpha^2)}{k-1},
\end{align*}
\end{subequations}
where the last relationship follows from the expression of atom norm $\|\cdot\|_{\mathcal{B}}$. Therefore, for every $x\in\mathcal{H}$ satisfying $\|x\|>\rho \|x\|_{\mathcal{B}}$ we get
$$\|\Phi x\|_2^2  \geq \left[\alpha^2 -\frac{\beta^2-\alpha^2}{\rho^2(k-1)}\right]\|x\|^2,$$
which implies the $(\rho, \widetilde{\alpha})$-robust width property.  This completes the proof.
\end{proof}

It has been shown that the $(\rho,\alpha)$-robust width property is sufficient (and necessary up to constants) for uniformly stable and robust sparse recovery by convex minimization. Under the same setting of Theorem \ref{mainthm}, the $(k,\alpha, \beta)$-generalized-weaken-RIP property is a stronger property and hence ensures uniformly \textsl{stable and robust} sparse recovery as well. In the following, we first introduce the concept of compressed sensing space, and then state a group of uniformly stable and robust sparse recovery results without proof details; they can be obtained directly by combining Theorem \ref{mainthm} with the stable and robust results in \cite{jameson2014robust,zhang2015robust}.

\begin{definition}(\cite{jameson2014robust})
A compressed sensing space $(\mathcal{H},\mathcal{A}, \|\cdot\|_{\sharp} )$  with bound $L$ consists of a finite-dimensional Hilbert space $\mathcal{H}$, a subset $\mathcal{A}\subseteq \mathcal{H}$, and a norm $\|\cdot\|_{\sharp}$ on $\mathcal{H}$ with following properties:

 (i) $0\in \mathcal{A}$.

 (ii) For every $a\in \mathcal{A}$ and $v\in \mathcal{H}$, there exists a decomposition $v= z_1+z_2$ such that
 $$\|a+z_1\|_{\sharp} =\|a\|_{\sharp}+\|z_1\|_{\sharp}, ~~~~\|z_2\|_{\sharp}\leq L\|v\|_2.$$
\end{definition}

 \begin{theorem}[weaken-RIP implies stable and robust recovery]
Let $\mathcal{A}$ be a $k$-simple subset relative to $\mathcal{B}$ and assume that each element $b\in \mathcal{B}$ satisfies $\|b\|=1$. Let $\|\cdot\|_{\mathcal{B}}$ be the induced norm by $\mathcal{B}$ and $\|\cdot\|_{\diamond}$ be its dual norm.  Suppose that there exists a bound $L$ such that the triple $(\mathcal{H},\mathcal{A}, \|\cdot\|_{\mathcal{B}} )$ is a compressed sensing space.  If linear operator $\Phi :\mathcal{H}\rightarrow \mathbb{F}^m$ has the $(k,\alpha, \beta)$-generalized-weaken-RIP property for sparsity $1 < k\in [n]$ and distortions $\alpha, \beta>0$, then
\begin{enumerate}
  \item [a)] for every $x^\natural \in\mathcal{H}, \theta\in (0,1), \lambda>0, \sigma>0$ and $w\in\mathbb{F}^M$ satisfying $\|\Phi^Tw\|_\diamond\leq \theta\lambda\sigma$, any solution $x^*$ to the Lasso model
$$\Min \frac{1}{2}\|\Phi x-(\Phi x^\natural +w)\|_2^2+\lambda\sigma \|x\|_{\mathcal{B}}$$
satisfies $\|x^*-x^\natural \| \leq C_0 \|x^\natural -a \|_{\mathcal{B}} +C_1 \cdot \sigma$ for every $a\in \mathcal{A}$. Here, $\lambda$ is some turning parameter and $\sigma$ is a measurement of the noise level, and
$$C_0=\left( \frac{1-\theta}{2\rho}-L\right)^{-1},~~~~ C_1=\frac{(1+\theta)\lambda}{\widetilde{\alpha}^2\rho}$$
where $\widetilde{\alpha}$ is given by Theorem \ref{mainthm} and the parameter $\rho$ satisfies $$\sqrt{\frac{\beta^2-\alpha^2}{\alpha^2(k-1)}}<\rho<\frac{1-\theta}{2L}.$$

\item [b)] for every $x^\natural \in\mathcal{H}$ and $e\in\mathbb{F}^M$  with  $\|e\|_2\leq \epsilon$, any solution $x^*$ to the basis pursuit model
$$\Min \|x\|_{\mathcal{B}}, ~\st ~~\|\Phi x-(\Phi x^\natural +w)\|_2\leq \epsilon $$
satisfies $\|x^*-x^\natural \| \leq C_2 \|x^\natural -a \|_{\mathcal{B}} +C_3 \cdot \sigma$ for every $a\in \mathcal{A}$. Here,
$$C_2=2\rho,~~~~ C_3=\frac{2}{\widetilde{\alpha}}$$
where the parameter $\rho$ satisfies $$\sqrt{\frac{\beta^2-\alpha^2}{\alpha^2(k-1)}}<\rho<\frac{1}{4L}.$$

  \item [c)]for every $x^\natural \in\mathcal{H}, \lambda>0, \sigma>0$ and $w\in\mathbb{F}^M$ satisfying $\|\Phi^Tw\|_\diamond\leq \lambda\sigma$, any solution $x^*$ to the Dantzig selector model
$$\Min \|x\|_{\mathcal{B}}, ~\textrm{subject to}~~  \|\Phi^T(\Phi x-(\Phi x^\natural +w))\|_{\diamond}\leq \lambda\sigma$$
satisfies $\|x^*-x^\natural \| \leq C_4 \|x^\natural -a \|_\sharp +C_5 \cdot \sigma$ for every $a\in \mathcal{A}$.
Here,
$$C_4=\left( \frac{1}{2\rho}-L\right)^{-1},~~~~ C_5=\frac{2\lambda}{\tilde{\alpha}^2\rho}$$
where the parameter $\rho$ satisfies $$\sqrt{\frac{\beta^2-\alpha^2}{\alpha^2(k-1)}}<\rho<\frac{1}{2L}.$$
\end{enumerate}
\end{theorem}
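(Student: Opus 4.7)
The plan is straightforward: combine Theorem \ref{mainthm} with the known robust-width-based recovery guarantees of \cite{jameson2014robust,zhang2015robust}. Since Theorem \ref{mainthm} already converts the $(k,\alpha,\beta)$-generalized-weaken-RIP into the $(\rho,\widetilde{\alpha})$-robust width property for every admissible $\rho$, the remaining task is (i) choosing $\rho$ so that both the positivity of $\widetilde{\alpha}$ and the hypotheses of the relevant recovery theorem are simultaneously satisfied, and (ii) reading off the constants $C_0,\ldots,C_5$ from those theorems by substituting $\widetilde{\alpha}$ for the robust width parameter.

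First I would record the two basic consequences of Theorem \ref{mainthm}. The quantity $\widetilde{\alpha}^2=\alpha^2-(\beta^2-\alpha^2)/(\rho^2(k-1))$ is strictly positive precisely when $\rho>\sqrt{(\beta^2-\alpha^2)/(\alpha^2(k-1))}$, which is exactly the lower bound on $\rho$ appearing in each of (a)--(c). For any such $\rho$, Theorem \ref{mainthm} asserts that $\Phi$ enjoys the $(\rho,\widetilde{\alpha})$-robust width property with respect to $\|\cdot\|_{\mathcal B}$. Since the triple $(\mathcal H,\mathcal A,\|\cdot\|_{\mathcal B})$ is assumed to be a compressed sensing space with bound $L$, all structural prerequisites of the recovery results in \cite{jameson2014robust,zhang2015robust} are in place.

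Next I would dispatch each of the three parts by direct citation. For (a), I apply the Lasso recovery theorem of \cite{jameson2014robust,zhang2015robust}: whenever $\Phi$ has $(\rho,\widetilde{\alpha})$-robust width, the compressed sensing space has bound $L$, and $\rho<(1-\theta)/(2L)$, every Lasso minimizer satisfies $\|x^\star-x^\natural\|\le C_0\|x^\natural-a\|_{\mathcal B}+C_1\sigma$ with the displayed constants, provided the noise satisfies $\|\Phi^T w\|_\diamond\le\theta\lambda\sigma$. The upper bound $\rho<(1-\theta)/(2L)$ is precisely what makes $C_0=((1-\theta)/(2\rho)-L)^{-1}$ positive. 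Parts (b) and (c) are handled identically by invoking the basis pursuit theorem (with $\rho<1/(4L)$) and the Dantzig selector theorem (with $\rho<1/(2L)$); in each case the cited bound has the same shape and the constants follow by substituting $\widetilde{\alpha}$ into the generic expressions.

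I expect the main obstacle to be purely bookkeeping: one must verify that the dual-norm noise assumption $\|\Phi^T w\|_\diamond\le\theta\lambda\sigma$ (resp.\ $\le\lambda\sigma$ for Dantzig, $\|e\|_2\le\epsilon$ for basis pursuit), the normalization $\|b\|=1$ for every $b\in\mathcal B$, and the compressed-sensing-space axioms match exactly the hypotheses of the cited theorems. No further analysis is required beyond Theorem \ref{mainthm}: the substantive content, namely the Maurey-type control of $\|\Phi x\|_2^2$ by $\|x\|^2$ and $\|x\|_{\mathcal B}^2$, has already been extracted, and the remainder is a black-box appeal to \cite{jameson2014robust,zhang2015robust}.
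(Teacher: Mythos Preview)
Your proposal is correct and matches the paper's approach exactly: the paper itself states the theorem ``without proof details,'' noting only that the results ``can be obtained directly by combining Theorem~\ref{mainthm} with the stable and robust results in \cite{jameson2014robust,zhang2015robust}.'' Your outline---choosing $\rho$ in the indicated range so that $\widetilde{\alpha}>0$ and the compressed-sensing-space hypotheses of the cited recovery theorems are met, then reading off the constants---is precisely this combination.
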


\section*{Acknowledgement}
 The work is supported by  the National Science Foundation of China (No.61271014 and No.61072118).

% use section* for acknowledgement
%\section*{Acknowledgment}
%
%
%The authors would like to thank...

% Can use something like this to put references on a page
% by themselves when using endfloat and the captionsoff option.

\end{document}